\newtheorem{proposition}{Proposition}
\newtheorem{lemma}{Lemma}
\def\R{{\mathbb R}}
\def\P{{\mathbb P}}
\def\E{{\mathbb E}}
\def\cal{\mathcal}
\def\etal{{\em et al. }}
\newcommand\ind[1]{\mathbbm{1}_{\left\{#1\right\}}}
\author[C. Fricker]{Christine Fricker}
\author[Ph. Robert]{Philippe Robert}
\author[J. Roberts]{James Roberts}
\address[C. Fricker, Ph. Robert, J. Roberts]{INRIA Paris --- Rocquencourt,  Domaine de Voluceau, 78153 Le Chesnay, France.}
\email{Christine Fricker@inria.fr}
\email{Philippe.Robert@inria.fr}
\urladdr{http://www-rocq.inria.fr/\string~robert}
\email{James.Roberts@inria.fr}
\begin{document}

\title{A versatile and accurate approximation for LRU cache performance }

% author names and affiliations
% use a multiple column layout for up to three different
% affiliations
%\author{\IEEEauthorblockN{Christine Fricker, Philippe Robert, James Roberts} \IEEEauthorblockA{INRIA, France\\ \{Christine.Fricker,Philippe.Robert,James.Roberts\}@inria.fr }}

% conference papers do not typically use \thanks and this command
% is locked out in conference mode. If really needed, such as for
% the acknowledgment of grants, issue a \IEEEoverridecommandlockouts
% after \documentclass

% for over three affiliations, or if they all won't fit within the width
% of the page, use this alternative format:
% 
%\author{\IEEEauthorblockN{Nelson Antunes\IEEEauthorrefmark{1},
%Christine Fricker\IEEEauthorrefmark{2},
%Philippe Robert\IEEEauthorrefmark{2}, 
%James Roberts\IEEEauthorrefmark{2} }
%\IEEEauthorblockA{\IEEEauthorrefmark{1}  University of the Algarve, Portugal (gomes.antunes@gmail.com)},
%\IEEEauthorblockA{\IEEEauthorrefmark{2} INRIA, France (\{christine.fricker,philippe.robert,james.roberts\}@inria.fr )}
%}

% use for special paper notices
%\IEEEspecialpapernotice{(Invited Paper)}
\begin{abstract}
%\boldmath
In a 2002 paper, Che and co-authors proposed a simple approach for estimating the hit rates of a cache operating the least recently used (LRU) replacement policy. The approximation proves remarkably accurate and is applicable to quite general distributions of object popularity. This paper provides a mathematical explanation for the success of the approximation, notably in configurations where the intuitive arguments of Che \etal clearly do not apply. The approximation is particularly useful in evaluating the performance of current proposals for an information centric network where other approaches fail due to the very large populations of cacheable objects to be taken into account and to their complex popularity law, resulting from the mix of different content types and the filtering effect induced by the lower layers in a cache hierarchy. 
\end{abstract}

% make the title area
\maketitle

%%%%%%%%%%%%%%%%%%%%%%%%%%%%%%%%%%%%%%%%%%%%%%%%%%%%%%%%%%%%

\section{Introduction}

The investigation of so-called information-centric networking (ICN) architectures is bringing renewed interest in the performance of caching. It is particularly important to understand the potential for trading off bandwidth for memory by implementing a network of caches and to develop tools that enable the optimization of such a network. The ICN application places particularly stringent requirements on evaluation tools since the population of content items available via the Internet is immense and caches are required to store content of diverse types, each type being distinguished by its peculiar popularity characteristics. 

In recent work on cache performance in the context of ICN \cite{FRRS12}, we applied a tool from the literature that was particularly well adapted to requirements. This is an approximation for evaluating the hit rates of a cache under the least recently used (LRU) replacement policy proposed by Che, Tung and Wang in a 2002 paper  \cite{CTW02}.  The ``Che approximation'' proved extremely accurate, even in conditions where the authors' intuitive arguments were clearly not justified. The objective of the present paper is to provide more rigorous mathematical arguments allowing the scope of the approximation to be more clearly defined.  

The Che approximation applies to the following model. Users request items from a population of $N$ objects, first testing to see if the object is present in a cache of capacity $C$. If the object is present it is returned to the user. If not, it is obtained from some other source and copied to the cache as it is returned to the user. This object replaces the one that was least recently requested. The probability a request is for object $n$, for $1\le n \le N$, is proportional to some popularity $q(n)$, independently of all past requests. 

The hit rate $h(n)$ for object $n$, i.e., the probability this object is present in the cache, is approximated by 
$$h(n) \approx 1-e^{-q(n)t_C}$$
where $t_C$ is the unique root of the equation
\begin{equation*}
 \sum_{n=1}^N (1-e^{-q(n)t})= C.
 \end{equation*}

%The Che approximation is versatile since the popularity law $q(\cdot)$ can be quite general. This is useful for ICN evaluations since the popularity is a composite
%advantages of Che: generality, mixtures, easy numerical calculations , can apply to non-symmetric networks (different cache sizes, different demands,...), like Rosenweig \etal  \cite{RKT2010}. Further discussion on the Che approx is provided in \cite{LCS06}.

\noindent \emph{Related work}

There is clearly a huge amount of related work on the performance of caching. We restrict ourselves here to a discussion of the papers most relevant to our work. Cited papers \cite{DT90} and \cite{Jelenkovic99} may be consulted for summaries of significant early work. 

Dan and Towsley \cite{DT90} derived an iterative algorithm for calculating approximate hit rates for a cache of size $C$ using the hit rates for a cache of size $C-1$. Complexity is $O(CN)$ which can be prohibitive in ICN applications where $N$ and $C$ are very large.  Dan and Towsley also propose a scheme comparable in complexity to the Che approximation for computing hit rates under FIFO replacement (i.e., the object  replaced is the one that has been in the cache the longest). Recent work by Rosenweig \etal applies the LRU algorithm of \cite{DT90} to analyse general networks of caches \cite{RKT2010}. %Their approach could alternatively use the Che algorithm.

Jelenkovic provides closed-form asymptotic hit rate estimates for particular choices for popularities $q(n)$  \cite{Jelenkovic99}. These are namely, a generic light-tailed law, $q(n)=e^{-\lambda n^{\beta}}$ for $n>0$, where $\lambda$ and $\beta$ are positive constants, and a generalized Zipf law, $q(n)=1/n^{\alpha}$ for $n\ge 0$, with $\alpha>1$. The latter law with $\alpha < 1$ and $\alpha =1$ is considered by Jelenkovic \etal in \cite{JKR05} where it is shown that hit rates can be expressed in terms of a parameter defined as the root of a certain equation.  The main disadvantage is that derived formulas are only applicable to particular popularity laws.

%The Che approximation is superior to prior work in that it applies to any popularity law and is sufficiently computationally efficient to be used for large populations and cache sizes appropriate for the ICN application.
\vspace{2mm}
In the following, we first discuss the considered traffic model in Section \ref{sec:trafficmodel}  before presenting the Che approximation in detail in Section \ref{sec:cheapprox}. Reasons for its remarkable accuracy are elucidated in Section \ref{sec:whyitworks} while Section \ref{sec:largecache} derives explicit results for the special case of Zipf popularity. An approximation for random replacement similar to the Che approximation is derived in Section \ref{sec:randomche} and used in Section \ref{sec:application} in an ICN application.

%Jelenkovic and co-authors have 
% related work: Jelenkovic \cite{Jelenkovic99, JKR05, JK08}, (Luca \cite{CGMP11}), Dan and Towsley .cite{DT90}, Che \cite{CTW02}, Laoutaris \cite{LCS06}...

 \section{Traffic model}
 \label{sec:trafficmodel}
 
We recall the independent reference model (IRM), discuss the nature of the popularity law $q(\cdot)$ and argue that the IRM is appropriate for modelling an information-centric network.

\subsection{The independent reference model}
 
Requests for objects occur in an infinite sequence where the object index required on the $i^{th}$ request,  for $i>0$, is an independent random variable on $\{1,2,\ldots,N\}$ with a common probability distribution. Specifically, the probability the required object has index $n$ is proportional to $q(n)$, for $1\le n \le N$. We refer to $q(\cdot)$ as the \emph{popularity law}. %This assumption ensures the cache state specifying the identity of the object in each cache position $i$, for $1\le i \le C$, is Markovian for both LRU and random replacement policies.
 
An alternative description of this stochastic is as follows (see Fill and Holst~\cite{FillHolst}, for example). Let $(\tau_n)$ be a sequence of independent exponential random variables with respective rates $(q(n))$. At any arbitrary instant, the next object to be requested is $n_0$ where $n_0$ is the index such that $\tau_{n_0}$ is the minimum of the $(\tau_n)$. 

\subsection{Popularity laws}
Cache performance depends crucially on the popularity law $q(\cdot)$. It is usual to order objects in order of deceasing popularity such that $q(1)\ge q(2) \ge \ldots \ge q(N)$. With this convention, the most frequently observed popularity law is a generalized Zipf law: $q(n)=1/n^\alpha$ with $\alpha>0$. 

Examples of content types with reported Zipf law behaviour are web pages \cite{Breslau99},  \cite{Mahanti00}, files shared using BitTorrent~\cite{FRRS12}, YouTube documents \cite{Gill07},  \cite{Cha07},  video on demand movies  \cite{Yu06}. 
The reason why content popularity follows the Zipf law remains unclear though the discussion by Mitzenmacher on generative models for power laws provides some possible explanations \cite{Mitzenmacher03}. The estimated value of $\alpha$ is often around 0.8 though cases with $\alpha>1$ have been observed. 

In fact, agreement between observations and the Zipf law is not always entirely convincing. Sometimes the popularity law has a lighter tail where the least popular objects are very unlikely to be requested. As a simple example of a light-tailed law, we consider geometric popularity: $q(n)=\rho^n$. This choice is not based on any measurement results but is made rather with the intention of stressing the Che approximation: the approximation is in fact exact for a uniform probability law, $q(n)=$constant; it is more likely to fail as the law becomes more accentuated, as with the geometric law.

While a Zipf  law with suitable $\alpha$ might be a reasonable representation for a homogeneous set of content, caches in the Internet must be designed for a traffic mix. The popularity law should reflect this mix by weighting single type popularities by the proportion of requests due to that type. It is necessary also to account for significant differences in the size of objects of different types (see Section \ref{sec:varsize}). 

\begin{figure}
\centering
\vspace{-13mm}
\hspace{-15mm}
{%\large
{\input{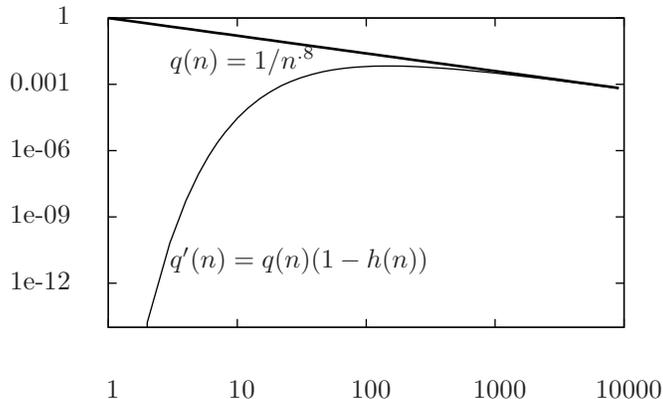}}
}
\vspace{-10mm}
\caption{A filtered popularity law: $q'(\cdot)$ is the popularity law for requests overflowing a size 1000 cache  when $q(\cdot)$ is Zipf(0.8) and $N=10000$.}
\label{fig:filter}
\end{figure}

In a network, there is typically a hierarchy where higher layer caches receive requests only for objects that are not found in lower layers. The popularity law is thus a filtered version of the initial law. Figure \ref{fig:filter} shows how the popularity law at a second layer cache is deformed by the first layer. 

Fortunately, the Che approximation is sufficiently versatile to account for both composite popularity laws and filtering \cite{FRRS12}. Note that it is not necessary to order objects in decreasing order of popularity or to normalize the $q(n)$ and we do not make these assumptions in the following analysis.

\subsection{Validity of the independent reference model}
The independent reference model is a convenient abstraction that allows analytical modelling where a more accurate representation of reality would be intractable. It is important to understand the limitations of this model.

Adopting the IRM is to assume popularity does not change. This is clearly not true for content where, not only the popularity of a given object, but also the catalogue of available objects change over time.  This is a manifestation of \emph{temporal locality}. The IRM may still be considered acceptable if popularity variations are slow compared to the time scale of cache churn. Non-stationarity has a greater impact on the accuracy of measurements of popularity. For instance, statistics gathered over a period of weeks will hardly be representative of the popularity of a catch-up TV show that is only on-line for a few days.

A second cause of error in predicting popularity laws is \emph{spatial locality}.  Most types of content will have strong regional bias with, for instance, a network in France observing high popularity for movies in French. However, this does not so much invalidate the IRM assumption as require more care in specifying popularity laws.

The IRM is reasonable when content requests are generated independently by a large population of users. This is not the case, however, for requests seen by caches at second and higher layers in a hierarchy. The request process overflowing from lower layer caches is correlated. The IRM is nevertheless reasonable if a higher layer cache receives the aggregation of independent, low intensity overflows from many low layer cache instances. Moreover, even for the simple tandem of two caches considered by Jenekovic and Kang \cite{JK08}, the impact of correlation on hit rates was shown to be slight.

We conclude from the above that the IRM is a reasonable basis for evaluating cache performance, as long as care is taken in specifying the popularity law. %This motivates our interest in demonstrating the general applicability of the Che approximation that relies on this fundamental assumption.

%%%%%%%%%%%%%%%%%%%%%%%%%%%%%%%%%%%%%%%%%%%%%%%%%%%%%%%%%%%%%%%

\section{The Che approximation}
\label{sec:cheapprox}
We present the Che approximation and demonstrate its accuracy in comparison to the results of simulation.

\subsection{A characteristic time}
\label{criticaltime}
%Under LRU replacement in a cache of size $C$, a hit for object $n$ occurs if there have been fewer than $C$ requests for objects other than $n$ since the last time object $n$ was requested. Recalling that object $i$ request inter-arrivals are exponential with parameter $\tau_i$, the hit rate for object $n$ can therefore be expressed: 
%\[
%\sum_{i\not=n}\ind{\tau_i<\tau_n}<C.
%\]
Consider a cache with capacity for $C$ objects under the LRU replacement policy. We introduce the following random variables, for $t\geq 0$, 
\[
X_n(t)=\sum_{i=1, i\ne n}^{N}\ind{\tau_i<t}
\]
and
 $$T_C(n)=\inf\{t>0: X_n(t)=C\}.$$
$X_n(t)$ is the number of different objects requested up to time $t$, excluding object $n$, and $T_C(n)$ is the time at which exactly $C$ different objects, other than $n$, have been requested. 

Without loss of generality, suppose a request for object $n$ occurs at time 0. The next request for object $n$ will be a hit if fewer than $C$ other objects are requested in $(0,\tau_n)$ (recall that $\tau_i$ is the generic, exponentially distributed inter-request interval  for object $i$). In other words, there is a hit if $X_n(\tau_n)<C$. Now,  $$\{X_n(\tau_n)<C\} =\{T_C(n)>\tau_n\}$$ so that, 
$$h(n)=\P(T_C(n)>\tau_n)=\E\left(1{-}e^{-q(n)T_C(n)}\right).$$

Since at $T_C(n)$ there are exactly $C$ objects in the cache, we have 
$$C=\sum_{i=1,i\ne n}^{N}\ind{\tau_i<T_C(n)},$$
and taking expectations,
\begin{equation*}
%\label{eq:hitC}
C=\sum_{i=1,i\ne n}^{N}\E\left(1-e^{-q(i)T_C(n)}\right).
\end{equation*}

A first approximation of Che \etal \cite{CTW02} is to assume for large $C$ that the $T_C(n)$ are nearly deterministic. They replace random variable $T_C(n)$ by the constant $t_C(n)$ that solves
\begin{equation*}
%\label{eq:hitC}
C=\sum_{i=1,i\ne n}^{N}\left(1-e^{-q(i)t}\right)
\end{equation*}
and approximate the hit rates by
$$h(n)=1-e^{-q(n) t_C(n)}.$$

A second approximation in \cite{CTW02} is to assume  $t_C(n)=t_C$ for $1 \le n \le N$ where $t_C$ solves $$C=\sum_{1\le i \le N}(1-e^{-q(i)t}).$$ This is arguably reasonable when individual popularities $q(n)$ are small relative to the sum $\sum_n q(n)$. Having verified numerically that this second approximation is generally very accurate, we adopt it for the remainder of the paper. This simplifies notation but it should be noted that our analysis could readily be adapted to preserve the dependence on $n$.

In summary, the Che approximation considered in this paper is as follows. Let $t_C$  be the unique root of the equation
\begin{equation}
\label{eq:hitC}
C=\sum_{i=1}^{N}\left(1-e^{-q(i)t}\right).
\end{equation}
The hit rate $h(n)$ for object $n$, for $1 \le n \le N$, is then 
\begin{equation}
\label{eq:hitrates}
h(n) =1-e^{-q(n)t_C}.
\end{equation}
Che \etal  refer to $t_C$ as the ``characteristic time'' of the cache.

%The right hand side of (\ref{eq:hitC}) is increasing in $t$, ensuring the unicity of $t_C$. 
%When $N$ is very large (O($10^9$ say for the population of web pages), the summation in (\ref{eq:hitC}) can be simplified by averaging small $q(n)$ over .... 

Random variables $X(t)$ and $T_C$ are defined as above without excluding object $n$. In particular, $X(t) = \sum_{1\le i \le N} \ind{\tau_i<t}$. Since the Bernoulli events in the summation are independent, we readily derive the mean and variance of $X(t)$:
\begin{eqnarray}
m(t)&=&\sum_{i=1}^N (1-e^{-q(i)t}), \label{eq:m(t)} \\
\sigma(t)^2&=&\sum_{i=1}^N e^{-q(i)t} (1-e^{-q(i)t}). \label{eq:sigma2(t)}
\end{eqnarray}

 \subsection{Variable sized objects}
 \label{sec:varsize}
 
So far we have assumed the cache capacity is measured in objects. In reality content objects have different sizes and cache capacity is more reasonably measured in bytes. Suppose object $n$ has size $\theta(n)$. Since the cache is intended to store a very large number of objects, $\theta(n)\ll C$ so that we can reasonably ignore boundary effects and adapt the Che approximation by replacing (\ref{eq:hitC}) with
\begin{equation}
\label{eq:hitCvar}
C=\sum_{i=1}^{N}\left(1-e^{-q(i)t}\right)\theta(i).
\end{equation}
The hit rates are still given by (\ref{eq:hitrates}).

An alternative way to account for variable size objects is to assume they are divided into constant sized chunks. This is the principle of content-oriented Internet architectures like CCN \cite{JSTP09}. Let $\theta(n)$ be given in chunks and assume all chunks inherit the popularity $q(n)$ of their parent object. Applying the Che approximation to chunks, it is easy to see that equation (\ref{eq:hitC}) for $t_C$ is then precisely the same as (\ref{eq:hitCvar}).

For some types of objects, it may be that chunks of the same object have different popularities (e.g., the first chunks of a video will be viewed more often than the last chunks). It makes more sense in this case to directly postulate a popularity law for chunks rather than objects. 

We conclude that the Che approximation presented in Section \ref{criticaltime} is appropriate also for variable size objects. It is not necessary to complicate the analysis in the next sections by introducing the size $\theta(n)$ (although to use (\ref{eq:hitCvar}) might be useful in practice as in \cite{FRRS12}).

\begin{figure}
  \centering
\  \vspace{-3mm}\hspace{-2cm}
  \subfloat[$N=10^4$, Zipf(0.8), objects 1, 10, 100, 1000] {\label{fig:hitzipf8} \includegraphics[width=80mm,height=50mm]{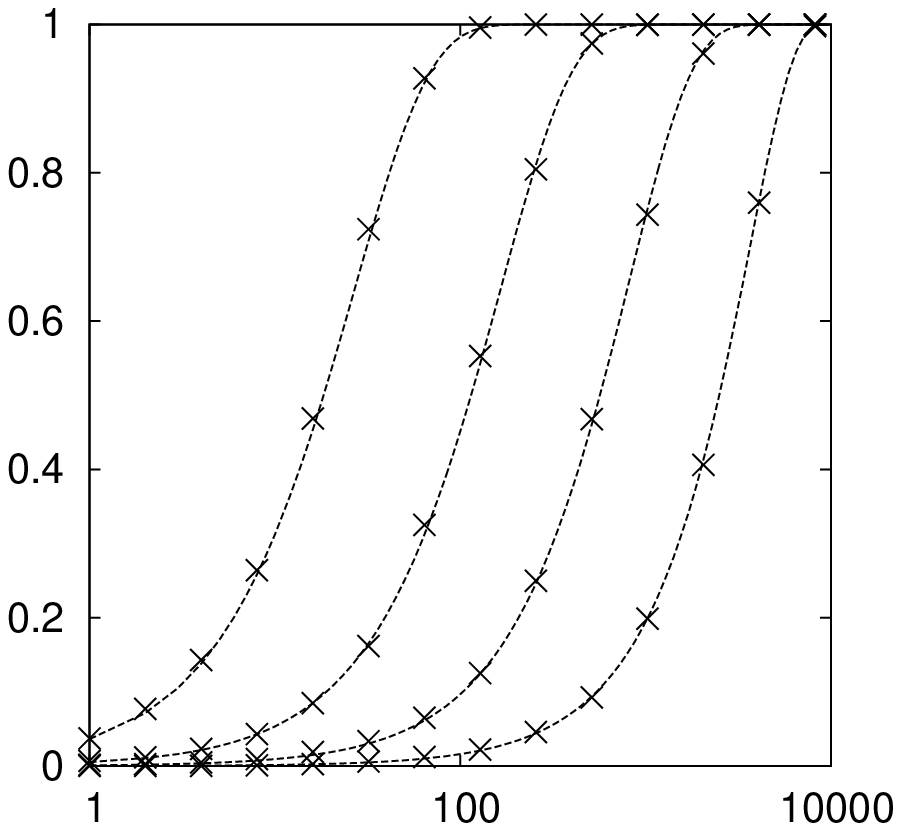} }
%\hspace{-15mm}     
  \subfloat[$N=10^4$, Zipf(1.2), objects 1, 10, 100, 1000] {\label{fig:hitzipf12} \includegraphics[width=80mm,height=50mm]{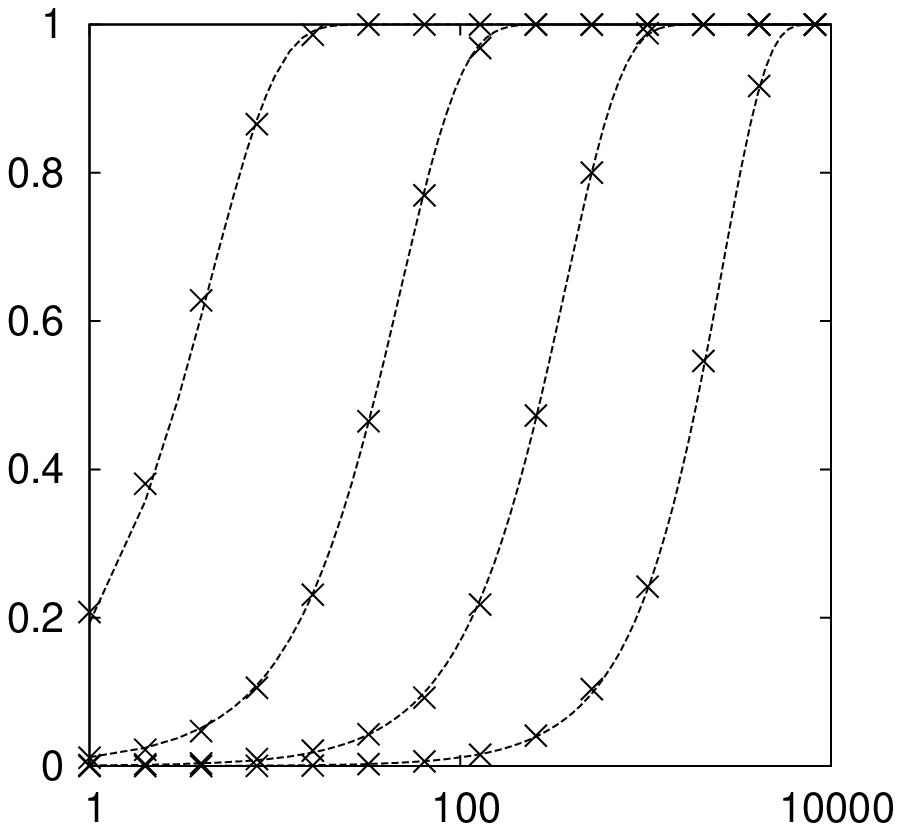} } \\
% \hspace{-15mm}     
\subfloat[$N=100$, Geo(0.9),  objects 1, 4, 16, 64] {\label{fig:hitgeo9} \includegraphics[width=80mm,height=50mm]{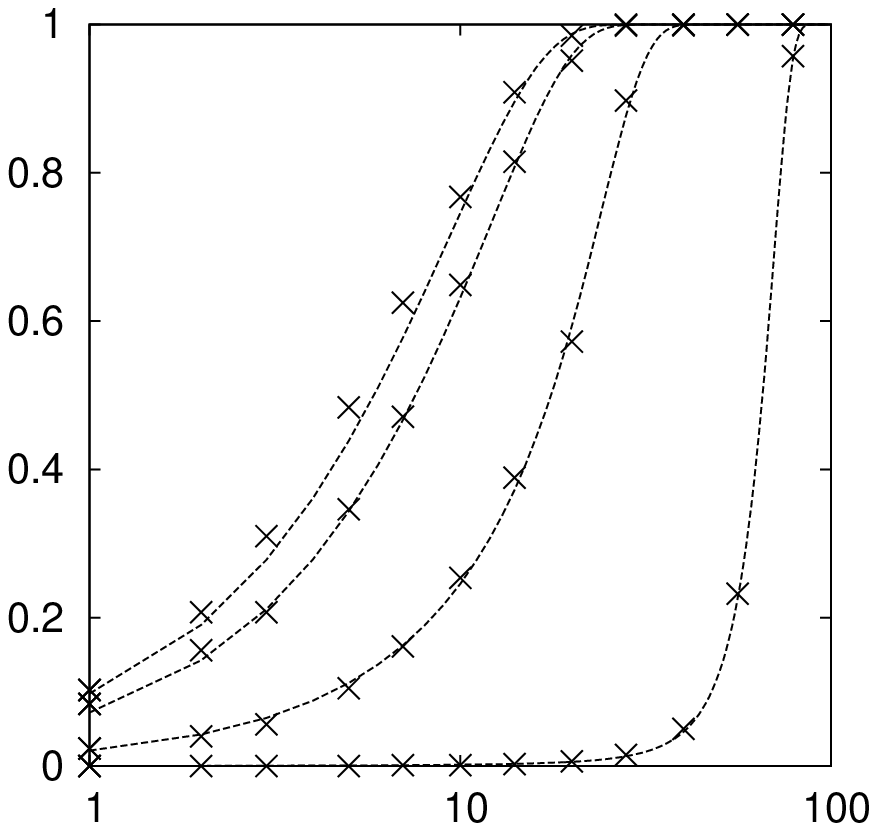} }
   \caption{Hit rate against cache size for selected objects, LRU replacement}
  \label{fig:approx}
  \vspace{-3mm}
\end{figure}

\subsection{Accuracy}
The accuracy of the approximation is typified by the results shown in Figure \ref{fig:approx}.  Figures \ref{fig:hitzipf8} and \ref{fig:hitzipf12} plot the hit rates of objects ranked 1, 10, 100 and 1000 from a population of 10000, assuming Zipf popularity with $\alpha=0.8$ and $\alpha=1.2$, respectively.  The crosses are the results of simulations with sufficiently long runs to ensure their high accuracy. The lines are derived from the Che approximation. Agreement is perfect, for all practical purposes.  

Figure \ref{fig:hitgeo9} confirms the approach is accurate also for a case where the intuitive arguments of Che \etal  do not apply. The population is only 100 and the popularity law is geometric with parameter $\rho=0.9$. The figure plots the hit rates of objects ranked 1, 4, 16 and 64. Discrepancies are visible only for object 1 and these are very slight.

%%%%%%%%%%%%%%%%%%%%%%%%%%%%%%%%%%%%%%%%%%%%%%%%%%%%%%%%%%%%%%%

\section{Why the approximation works}
\label{sec:whyitworks}
We explain why the Che approximation works so well, even for a small object population and a small cache size.

\subsection{A Gaussian approximation for $X(t)$}
Rather than attempting to study $T_C$ directly, we consider $X(t)$ and exploit the  elementary relation $\P(T_C>t)= \P(X(t)<C)$, for $t\geq 0$. Since  the variable $X(t)$ is a sum of independent random variables,  it is natural to expect its distribution to be approximately Gaussian.  %a central limit theorem and express $X(t)$ in terms of a Gaussian random variable. 

\begin{figure}
\     \vspace{-8mm}\hspace{-2cm}
  \subfloat[Zipf(0.8), $N=10000$]{\label{fig:norm-zipf-8}\includegraphics[width=85mm,height=55mm]{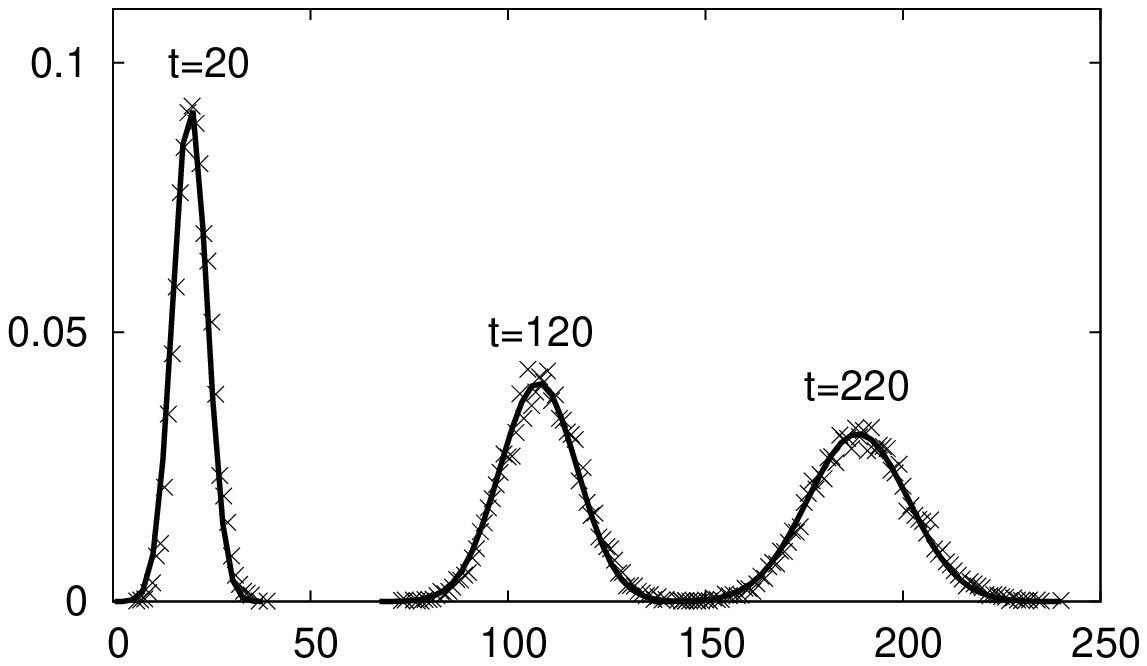}}  
%      \vspace{-5mm}
  \subfloat[Zipf(1.2), $N$=10000]{\label{fig:norm-zipf-12}\includegraphics[width=85mm,height=55mm]{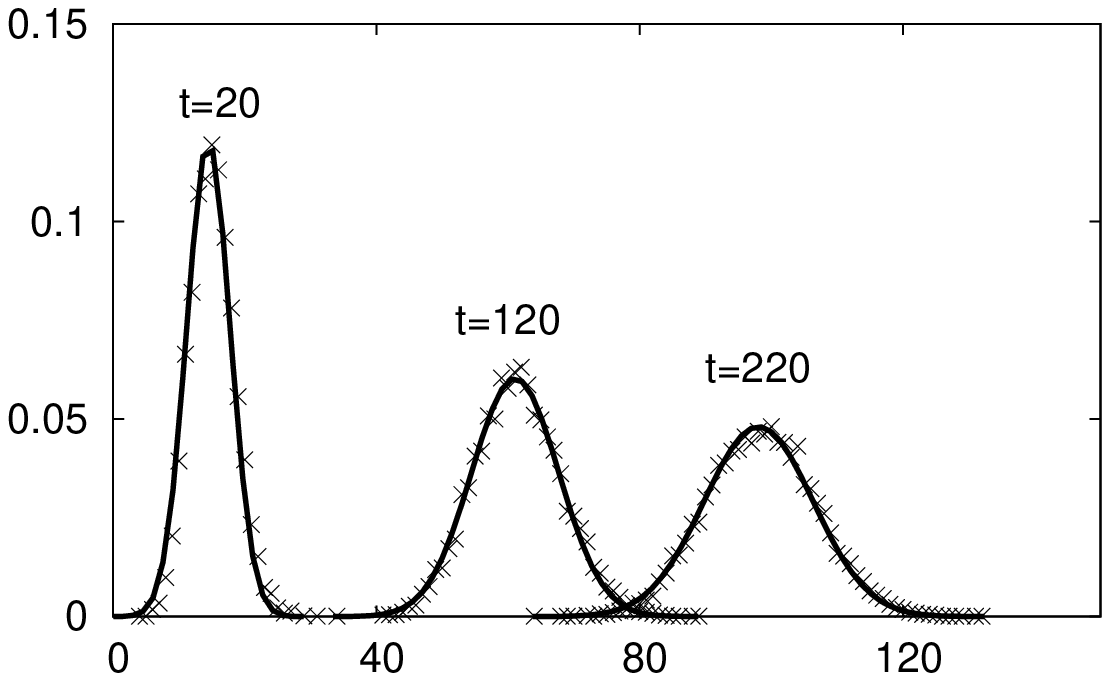}} \\ 
%      \vspace{-5mm}
\subfloat[Geo(0.9), $N$=100]{\label{fig:norm-geo-9}\includegraphics[width=85mm,height=55mm]{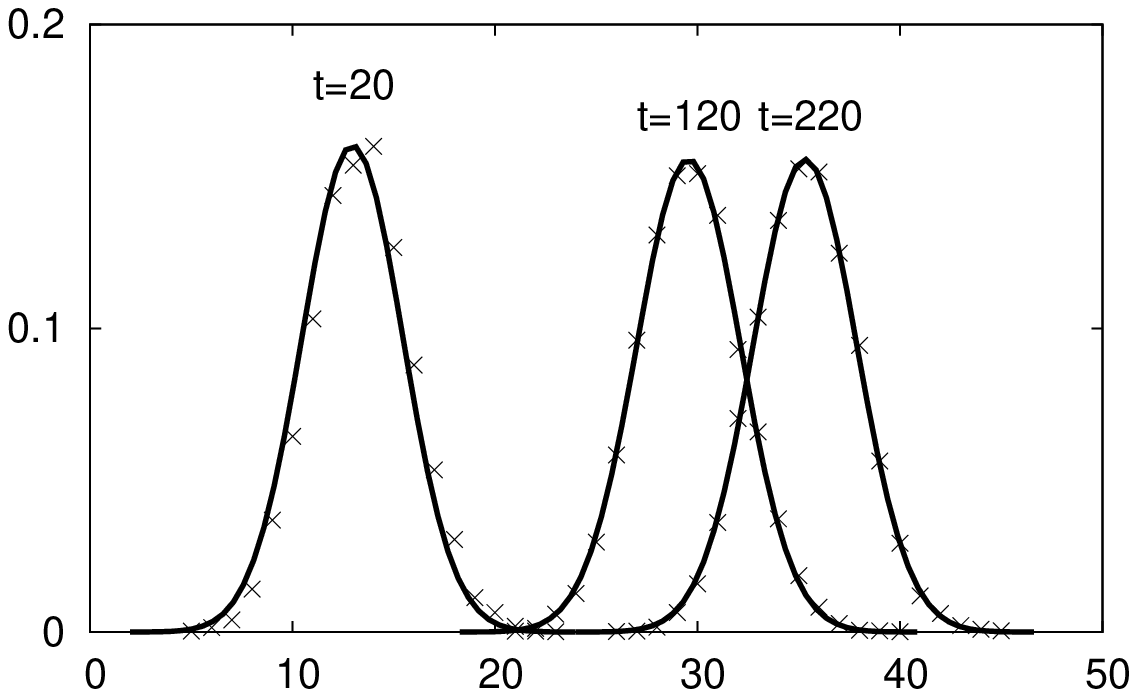}} 
   \caption{Distribution of $X(t)$, simulation and Gaussian approximation}
         \vspace{-5mm}

     \label{fig:normal}
\end{figure}

Figure \ref{fig:normal}  plots the distribution of $X(t)$ for some values of $t$ when the popularity distribution $q(n)$ is: \ref{fig:norm-zipf-8})  Zipf(.8) for $1\le n\le 10^4$; \ref{fig:norm-zipf-12}) Zipf(1.2) for $1\le n\le 10^4$; and \ref{fig:norm-geo-9}) Geo(.9) for $1\le n\le 100$. The figures plot simulation results as crosses with a superposed normal distribution of mean and variance given by (\ref{eq:m(t)}) and (\ref{eq:sigma2(t)}), respectively. The figures clearly suggest that $X(t)$ is indeed Gaussian for the range of popularity laws and times $t$ of interest. 

\subsection{A central limit theorem}

The following proposition establishes conditions under which a Gaussian approximation for $X(t)$ is reasonable. 

\begin{proposition}\label{Central} If 
$W(t){:=}\left(X(t)-m(t)\right)/\sigma(t)$, where $m(t)$ and  $\sigma(t)$ are given by (\ref{eq:m(t)}) and (\ref{eq:sigma2(t)}), respectively, then
\[
\|{\cal L}(W(t))-{\cal L}({\cal G})\|  {:=}\sup_{x\in\R} \left|\P(W(t)\leq x) {-}\P({\cal G}\leq x)\right| 
  {\leq}  \frac{K}{\sigma(t)},
\]
where ${\cal G}$ is a centered normal random variable and $K\le0.56$.
\end{proposition}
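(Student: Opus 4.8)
The plan is to recognize $X(t)=\sum_{i=1}^N \ind{\tau_i<t}$ as a sum of independent, non-identically distributed Bernoulli variables and apply the Berry--Esseen theorem in its form for independent (not necessarily identically distributed) summands. Writing $p_i=\P(\tau_i<t)=1-e^{-q(i)t}$ and $Y_i=\ind{\tau_i<t}-p_i$, the $Y_i$ are independent and centered with $\E(Y_i^2)=p_i(1-p_i)$, so that $\sum_{i=1}^N\E(Y_i^2)=\sigma(t)^2$ by (\ref{eq:sigma2(t)}). Then $W(t)=\bigl(\sum_{i=1}^N Y_i\bigr)/\sigma(t)$ is precisely the normalized sum to which the theorem applies, with $\cal G$ taken to be standard normal.

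I would then invoke the non-i.i.d. Berry--Esseen bound: for independent centered $Y_i$ with finite third moments and $s^2=\sum_i\E(Y_i^2)$,
\[
\sup_{x\in\R}\left|\P\!\left(\tfrac{1}{s}\sum_i Y_i\le x\right)-\P(\cal G\le x)\right|\;\le\; C\,\frac{\sum_{i}\E(|Y_i|^3)}{s^3},
\]
where $C$ is an absolute constant whose sharpest known value in the non-identically distributed case is $C\le 0.56$; this is exactly the $K$ appearing in the statement. The single genuine computation is to dominate the third-moment sum by the variance. For a centered Bernoulli variable,
\[
\E(|Y_i|^3)=p_i(1-p_i)^3+(1-p_i)p_i^3=p_i(1-p_i)\bigl[(1-p_i)^2+p_i^2\bigr],
\]
and since $(1-p_i)^2+p_i^2=1-2p_i(1-p_i)\le 1$ for $p_i\in[0,1]$, we obtain the termwise bound $\E(|Y_i|^3)\le p_i(1-p_i)=\E(Y_i^2)$. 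Summing gives $\sum_i\E(|Y_i|^3)\le \sigma(t)^2$, and substituting into the Berry--Esseen inequality with $s=\sigma(t)$ yields
\[
\|\cal L(W(t))-\cal L(\cal G)\|\;\le\; C\,\frac{\sigma(t)^2}{\sigma(t)^3}\;=\;\frac{C}{\sigma(t)}\;\le\;\frac{0.56}{\sigma(t)},
\]
which is the claim.

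I do not expect a real obstacle here: the proof reduces to citing the correct Berry--Esseen constant and performing the elementary third-moment calculation. The structural fact that makes the bound so clean—and worth highlighting—is the termwise domination $\E(|Y_i|^3)\le\E(Y_i^2)$, which holds precisely because each summand is a bounded (Bernoulli) indicator, so no moment assumptions beyond boundedness are needed. The only point I would verify in passing is that the statement tacitly assumes $\sigma(t)>0$, so that $W(t)$ and the right-hand side are well defined; this holds whenever at least one $q(i)t$ is strictly between $0$ and $\infty$, and in the regime of interest $\sigma(t)$ is in fact large, which is exactly what renders the bound informative.
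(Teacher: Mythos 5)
Your proof is correct and follows essentially the same route as the paper: both recognize $X(t)$ as a sum of independent centered Bernoulli variables, bound each third absolute moment $p_i(1-p_i)^3+(1-p_i)p_i^3$ by the variance $p_i(1-p_i)$, and invoke the non-i.i.d.\ Berry--Esseen inequality with the constant $0.56$ from Shevstova. The only difference is cosmetic (you normalize by $\sigma(t)$ at the end rather than defining the summands as already normalized), and your remark that $\sigma(t)>0$ is tacitly assumed is a fair, if minor, observation.
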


\begin{proof}
Let  $Z_n(t)=[\ind{\tau_n\leq t}- (1 - \exp(-q(n)t))]/\sigma(t)$ and
\[
Z(t) {:=} \frac{X(t)-m(t))}{\sigma(t)} = \sum_{n=1}^N Z_n(t).
\]
Then $\E(Z_n(t))=0$,  $\E(Z(t)^2)=1$ and we have,
\begin{multline*}
\sigma(t)^{3} \E\left(|Z_n(t)|^3\right){=} e^{-3q(n) t}(1{-}e^{-q(n) t}){+}(1{-}e^{-q(n) t})^3e^{-q(n) t}
\\\leq e^{-q(n) t}(1-e^{-q(n) t}).
\end{multline*}
Hence,
\[
\sum_{n=1}^{N} \E\left(|Z_n(t)|^3\right)\leq 
\frac{1}{\sigma(t)^{3}}  \sum_{n=1}^{N}e^{-q(n) t}(1-e^{-q(n) t})= \frac{1}{\sigma(t)}.
\]

Berry-Esseen's Inequality, see Feller~\cite[p.~544]{Feller}, gives  the relation
\[
\|{\cal L}(W(t))-{\cal L}({\cal G})\|\leq K \sum_{n=1}^{N}\E(|Z_n(t)|^3)\leq\frac{K}{\sigma(t)},
\]
where $K$ is a constant. It has recently been shown that the value of $K$ is no greater than 0.56 \cite{Shevstova2010}.
\end{proof}

This proposition confirms that $X(t)$ is asymptotically Gaussian as $t\to \infty$ if $\sigma(t)$ grows unboundedly. It is not wholly satisfactory, however, in that it does not explain the excellent fit illustrated in Figure \ref{fig:normal} for small values of $t$ and for geometric popularity where $\sigma(t)$ is not an increasing function. This appears to be a normal situation for central limit theorems where convergence to the normal distribution is often much better than predicted by the analytical bounds.

\subsection{Approximating the hit rates}
Starting with the Gaussian approximation for the distribution of $X(t)$, we argue that the Che approximation is indeed generally applicable.

\begin{proposition}
\label{prop:erfc}
Assuming $X(t)$ is Gaussian of mean $m(t)$ and standard deviation $\sigma(t)$, the hit rate for object $n$ may be written,
\begin{equation}
h(n) = 1 - \frac{1}{2}\int_0^{+\infty} \mathrm{erfc}\left(\frac{C-m(u)}{\sqrt{2}\sigma(u)}\right) qe^{-q u}\,du
\label{eq:erfc}
\end{equation}
where $\mathrm{erfc}(x)$ is the complementary error function.
\end{proposition}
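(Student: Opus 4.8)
The plan is to combine the probabilistic representation of the hit rate established in Section~\ref{criticaltime} with the Gaussian approximation for $X(t)$, after which everything reduces to the standard identity linking the normal distribution function and the complementary error function.

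First I would recall from Section~\ref{criticaltime} that a request for object $n$ scores a hit exactly when $X_n(\tau_n) < C$, equivalently $T_C(n) > \tau_n$, where $\tau_n$ is the exponential inter-request time of object $n$ with rate $q = q(n)$. Crucially, $\tau_n$ is independent of the family $(\tau_i)_{i \ne n}$ and hence of the entire process $X_n(\cdot)$. I would therefore condition on the value of $\tau_n$, whose density is $q e^{-qu}$ on $(0,+\infty)$, and use the elementary relation $\P(T_C(n) > u) = \P(X_n(u) < C)$ to write
$$
h(n) = \int_0^{+\infty} \P\bigl(X_n(u) < C\bigr)\, q e^{-qu}\,du.
$$

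Next I would invoke the Gaussian approximation. Identifying the distribution of $X_n(u)$ with that of $X(u)$, taken to be normal with mean $m(u)$ and variance $\sigma(u)^2$ given by (\ref{eq:m(t)}) and (\ref{eq:sigma2(t)}), standardisation gives $\P(X(u) < C) = \Phi\bigl((C-m(u))/\sigma(u)\bigr)$ with $\Phi$ the standard normal distribution function. The final ingredient is the routine identity $\Phi(z) = 1 - \tfrac12\,\mathrm{erfc}(z/\sqrt2)$, which turns the integrand into $1 - \tfrac12\,\mathrm{erfc}\bigl((C-m(u))/(\sqrt2\,\sigma(u))\bigr)$. Substituting and splitting the integral, the constant part integrates the exponential density to $1$, leaving exactly (\ref{eq:erfc}).

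The computation is short, so the only point requiring care is conceptual rather than algebraic: the genuine independence holds between $\tau_n$ and $X_n(\cdot)$, the count that \emph{excludes} object $n$, whereas the Gaussian parameters $m(u)$ and $\sigma(u)$ in (\ref{eq:erfc}) are built from the full sum over all $N$ objects. Passing to (\ref{eq:erfc}) therefore rests on the substitution $X_n \approx X$, i.e. on neglecting the single excluded term—precisely the second Che approximation of Section~\ref{sec:cheapprox}, already argued to be harmless when each $q(n)$ is small relative to $\sum_i q(i)$. I would flag this substitution explicitly rather than bury it, since it is the sole place where an approximation enters the derivation, the remainder being exact given the Gaussian hypothesis.
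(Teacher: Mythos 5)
Your proposal is correct and takes essentially the same route as the paper: the paper writes $h(n)=1-\E\left(e^{-qT_C}\right)$ and uses $\E\left(e^{-qT_C}\right)=\int_0^{+\infty}\P(T_C<u)\,qe^{-qu}\,du=\int_0^{+\infty}\P(X(u)>C)\,qe^{-qu}\,du$ followed by the Gaussian tail identity, which is exactly your conditioning on the independent exponential $\tau_n$ phrased for the complementary event. Your explicit flag that the Gaussian hypothesis is applied to $X$ rather than $X_n$ is a presentational refinement the paper leaves implicit (having already adopted the second Che approximation), not a different argument.
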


\begin{proof}
Recall that $h(n)=1-\E\left(e^{-q(n) T_C}\right)$. Consider $\E\left(e^{-q T_C}\right)$ for some $q>0$.
By definition of $T_C$,
\[
\E\left(e^{-q T_C}\right)=\int_0^{+\infty} \P(T_C<u) qe^{-q u}\,du=\int_0^{+\infty} \P(X(u)>C) qe^{-q u}\,du
\]

By the assumption that $X(t)$ is Gaussian, we can write
$$\P(X(u)>C) = \frac{1}{2}\mathrm{erfc}\left(\frac{C-m(u)}{\sqrt{2}\sigma(u)}\right) $$
and the proposition follows.
\end{proof}

\begin{figure}
\centering
\vspace{-10mm}
\hspace{-15mm}
{%\large
\resizebox{10cm}{!}{\input{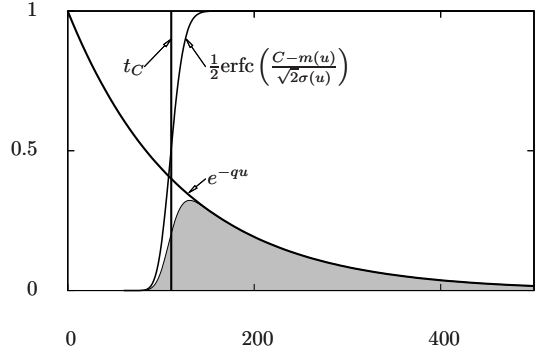}}
}
\vspace{-5mm}
\caption{Approximating the integral}
\label{fig:meanval}
\end{figure}

Proposition \ref{prop:erfc} could be used directly to evaluate the hit rates. Instead, we derive the Che approximation as an approximation for the integral in (\ref{eq:erfc}). Note that the complementary error function in the integrand is an S-shaped function tending rapidly to asymptotes at 0 and 1 from a point of inflection at $m(u)=C$, i.e., at $u=t_C$. We therefore replace this function in the integral by the step function $ \ind{m(u)>C}$ yielding the following approximation
\begin{align*}
\E\left(e^{-q T_C}\right) & \approx  \int_0^{+\infty} \ind{m(u)>C} qe^{-q u}\,du\\
                                          & =   \int_0^{+\infty} \ind{u>t_C} qe^{-q u}\,du 
                                           =  e^{-q t_C}.
 \end{align*}
The second step follows from $m(t_C)=C$ and the fact that $m(u)$ is increasing in $u$. This establishes the validity of the Che approximation on condition that replacing erfc by a step function is accurate.

This accuracy is illustrated in Figure  \ref{fig:meanval} for a particular set of parameters: Zipf(0.8) popularity, $N=10000$, $C=100$ and $q=.83 \times 10^{-3}$. The exact integral is the shaded area. The approximation replaces this by the area to the right of $t_C$ and under the exponential curve. Visibly, the approximation is good in this case. It is easy to convince oneself that this is generally true for the popularity laws of interest. However, it unfortunately does not seem possible to quantify the error due to the non-explicit nature of the erfc function argument in Proposition \ref{prop:erfc}.

\section{Zipf popularity and a large cache}
\label{sec:largecache}

For Zipf law popularity, we can prove the asymptotic validity of the Che approximation and characterize $t_C$ directly, without solving equation (\ref{eq:hitC}). %Note first that for Zipf popularity, from (\ref{eq:m(t)}) and (\ref{eq:sigma2(t)}),  $m(t) \to \infty$ and $\sigma(t) \to \infty$ as $N \to \infty$.

\subsection{Preliminary results}
We first prove two lemmas about the moments of $X(t)$.

\begin{lemma}\label{lem:estvar} For $t\geq 0$, the variance of $X(t)$ can be expressed in terms of its mean:
\[
\sigma(t)^2=m(2t) - m(t).
\]
\end{lemma}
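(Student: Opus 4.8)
The plan is to verify the identity by direct term-by-term computation, since both $m(t)$ and $\sigma(t)^2$ are given explicitly as finite sums over the $N$ objects in (\ref{eq:m(t)}) and (\ref{eq:sigma2(t)}). The whole statement reduces to a pointwise algebraic identity between the summands, so no probabilistic machinery or limiting argument is needed; I would simply manipulate the exponentials.

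First I would expand the difference $m(2t)-m(t)$ using (\ref{eq:m(t)}) with argument $2t$ and with argument $t$:
\[
m(2t) - m(t) = \sum_{i=1}^N \left[(1 - e^{-2q(i)t}) - (1 - e^{-q(i)t})\right] = \sum_{i=1}^N \left(e^{-q(i)t} - e^{-2q(i)t}\right).
\]
Next I would factor each summand as $e^{-q(i)t} - e^{-2q(i)t} = e^{-q(i)t}\bigl(1 - e^{-q(i)t}\bigr)$, which is precisely the $i$-th term appearing in the expression (\ref{eq:sigma2(t)}) for $\sigma(t)^2$. Summing the matched terms over $i$ then yields $\sigma(t)^2 = m(2t)-m(t)$, as claimed.

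There is essentially no obstacle here: the result is an elementary consequence of the fact that the Bernoulli variable $\ind{\tau_i<t}$ has success probability $p_i(t)=1-e^{-q(i)t}$, so its variance $p_i(t)\bigl(1-p_i(t)\bigr)$ can be rewritten as $(1-e^{-2q(i)t})-(1-e^{-q(i)t})=p_i(2t)-p_i(t)$, i.e.\ the success probability at the doubled time minus the success probability at time $t$. The only point worth recording is that the identity holds termwise, and hence requires nothing beyond $q(i)>0$: in particular no ordering, normalization, or asymptotic regime on the popularities is invoked, so the lemma is exact for every $t\geq 0$.
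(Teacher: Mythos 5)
Your proof is correct and follows essentially the same route as the paper's: a termwise algebraic identity, writing $e^{-q(i)t}(1-e^{-q(i)t}) = (1-e^{-2q(i)t})-(1-e^{-q(i)t})$ and summing over $i$. The only (immaterial) difference is that you start from $m(2t)-m(t)$ and factor, whereas the paper starts from $\sigma(t)^2$ and expands.
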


\begin{proof}
From (\ref{eq:sigma2(t)}),
\[
\sigma(t)^2= \sum_{n=1}^{N} e^{-q(n) t}\left(1-e^{-q(n) t}\right)
= \sum_{n=1}^{N} \left(1- e^{-2q(n) t}\right) - \left(1-e^{-q(n) t}\right),
\]
which yields the desired identity. 
\end{proof}

Now consider the behaviour of the average of $(X(\cdot))$ at an appropriate time scale.

\begin{lemma}\label{lem:AsympE} 
With $q(n)=1/n^{\alpha}$ for $1\le n \le N$, for any $\beta>0$, 
\[
\E(X(\beta N^\alpha ))= \psi_\alpha(\beta)N+o(N),
\]
where 
\[
\psi_\alpha(\beta)= 1-\int_0^1 e^{-\beta/x^\alpha}\,dx.
\]
\end{lemma}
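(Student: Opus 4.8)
The plan is to compute $\E(X(\beta N^\alpha))$ directly from its definition and recognize the resulting sum as a Riemann sum. By equation (\ref{eq:m(t)}), with $q(n)=1/n^\alpha$ and $t=\beta N^\alpha$, we have
\[
\E(X(\beta N^\alpha)) = \sum_{n=1}^N \left(1 - e^{-\beta N^\alpha/n^\alpha}\right) = \sum_{n=1}^N \left(1 - e^{-\beta/(n/N)^\alpha}\right).
\]
First I would factor out $N$ and write this as $N \cdot \frac{1}{N}\sum_{n=1}^N f(n/N)$ where $f(x) = 1 - e^{-\beta/x^\alpha}$. The term $\frac{1}{N}\sum_{n=1}^N f(n/N)$ is precisely a right-endpoint Riemann sum for $\int_0^1 f(x)\,dx = \int_0^1 (1-e^{-\beta/x^\alpha})\,dx = \psi_\alpha(\beta)$, so the heuristic is immediate: the sum converges to $\psi_\alpha(\beta) N$, and the error is $o(N)$.

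The main work is making the Riemann-sum convergence rigorous and controlling the error term to confirm it is genuinely $o(N)$. The key step is to bound $\left| \frac{1}{N}\sum_{n=1}^N f(n/N) - \int_0^1 f(x)\,dx \right|$. Since $f$ is monotone increasing on $(0,1]$ (as $x \mapsto \beta/x^\alpha$ decreases and $1-e^{-\cdot}$ increases), I would exploit monotonicity rather than differentiability: for a monotone function the Riemann sum and the integral differ by at most $\frac{1}{N}(f(1) - \lim_{x\to 0^+} f(x))$ in absolute value via the standard telescoping comparison of $f(n/N)$ against $\int_{(n-1)/N}^{n/N} f$ and $\int_{n/N}^{(n+1)/N} f$. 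Here $f(0^+) = 0$ and $f(1) = 1-e^{-\beta}$, both finite, so the Riemann-sum error is $O(1/N)$, giving a total error of $O(1)$ after multiplying by $N$, which is certainly $o(N)$.

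The one point requiring care is the behaviour of $f$ near $x=0$, where the argument $\beta/x^\alpha$ blows up. The function $f$ itself stays bounded (it tends to $1$, not to $0$, wait---as $x\to 0^+$, $\beta/x^\alpha \to +\infty$ so $e^{-\beta/x^\alpha}\to 0$ and $f(x)\to 1$), so I should double-check the monotonicity direction and endpoint values before invoking the telescoping bound; in any case $f$ extends continuously and boundedly to $[0,1]$, so the integral $\int_0^1 f$ converges and the monotone Riemann-sum estimate applies without difficulty.

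\begin{proof}[Proof sketch]
By (\ref{eq:m(t)}) with $q(n)=1/n^\alpha$,
\[
\E(X(\beta N^\alpha)) = \sum_{n=1}^N \left(1 - e^{-\beta/(n/N)^\alpha}\right) = N \cdot \frac{1}{N}\sum_{n=1}^N f\!\left(\frac{n}{N}\right),
\]
where $f(x) = 1 - e^{-\beta/x^\alpha}$ is bounded and monotone on $(0,1]$. The Riemann sum converges to $\int_0^1 f(x)\,dx = \psi_\alpha(\beta)$ with error $O(1/N)$ by monotonicity, so $\E(X(\beta N^\alpha)) = \psi_\alpha(\beta) N + O(1) = \psi_\alpha(\beta) N + o(N)$.
\end{proof}

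I expect the main obstacle to be purely technical: verifying uniform control of the Riemann-sum error near the singularity of the exponent at $x=0$, which is resolved by noting $f$ stays bounded there and invoking the monotone-function estimate rather than a derivative bound.
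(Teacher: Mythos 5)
Your proposal is correct and follows essentially the same route as the paper, which simply identifies $\frac{1}{N}\sum_{n=1}^N(1-e^{-\beta/(n/N)^\alpha})$ as a Riemann sum for $\int_0^1(1-e^{-\beta/x^\alpha})\,dx$ and asserts the $o(N)$ error; you merely add the explicit justification (monotone Riemann-sum bound giving $O(1)$ total error) that the paper omits. Note only that $f(x)=1-e^{-\beta/x^\alpha}$ is monotone \emph{decreasing} on $(0,1]$ with $f(0^+)=1$ and $f(1)=1-e^{-\beta}$ --- the direction you initially stated is reversed, but you catch this yourself and the bounded-monotone estimate goes through unchanged.
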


\begin{proof}

\[
\E(X(\beta N^\alpha ))= N\times \frac{1}{N}\sum_{n=1}^N \left(1-e^{-\beta N^\alpha/n^\alpha}\right)=
N\left(\int_0^1 1-e^{-\beta/x^\alpha}\,dx\right) +o(N).
\]

\end{proof}

\subsection{A Gaussian approximation for $T_C$}
In the particular case of Zipf law popularity we show that $T_C$ is asymptotically Gaussian and derive $t_C$ as its expectation. In the following $\lfloor z\rfloor$ denotes the integer part of $z\geq 0$.

\begin{proposition}
\label{tclim}

For Zipf($\alpha$) popularity, as $N$ and $C$ tend to infinity with $C=\lfloor \delta N\rfloor$, for $0<\delta<1$, we have
\begin{equation}
t_{\lfloor \delta N\rfloor}=\psi_\alpha^{-1}(\delta) N^{\alpha}+o(N^{\alpha}),
\label{eq:tC}
\end{equation}
where $\psi_\alpha(\beta)$ is defined in Lemma \ref{lem:AsympE}.

Furthermore, the random variable 
\[
\frac{\psi_\alpha'(\psi_\alpha^{-1}(\delta))}{\sqrt{\psi_\alpha(2\psi_\alpha^{-1}(\delta))-\delta}}\frac{(T_{\lfloor \delta N\rfloor}-t_{\lfloor \delta N\rfloor})}{N^{\alpha-1/2}}
\]
converges in distribution to a centred Gaussian random variable. 
\end{proposition}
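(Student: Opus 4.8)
The plan is to prove both assertions by transferring the central limit theorem for $X(t)$ (Proposition~\ref{Central}) into a statement about $T_C$, exploiting the duality $\{T_C > t\} = \{X(t) < C\}$. First I would establish the deterministic scaling $t_{\lfloor\delta N\rfloor} = \psi_\alpha^{-1}(\delta)N^\alpha + o(N^\alpha)$. By Lemma~\ref{lem:AsympE}, writing $t = \beta N^\alpha$, the equation $m(t_C) = C = \lfloor\delta N\rfloor$ becomes $\psi_\alpha(\beta)N + o(N) = \delta N + o(N)$, so $\psi_\alpha(\beta)\to\delta$; since $\psi_\alpha$ is continuous and strictly increasing (its derivative $\psi_\alpha'(\beta) = \int_0^1 x^{-\alpha}e^{-\beta/x^\alpha}\,dx > 0$), it is invertible and one recovers $\beta \to \psi_\alpha^{-1}(\delta)$, giving~(\ref{eq:tC}). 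One must check that $\psi_\alpha^{-1}(\delta)$ is finite for $0<\delta<1$, i.e. that $\psi_\alpha$ takes all values in $(0,1)$; this holds because $\psi_\alpha(0)=0$ and $\psi_\alpha(\infty)=1$.

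For the Gaussian limit, I would fix $x\in\R$ and write $T_{\lfloor\delta N\rfloor} = t_{\lfloor\delta N\rfloor} + x\,a_N$ with the normalisation $a_N = N^{\alpha-1/2}\sqrt{\psi_\alpha(2\psi_\alpha^{-1}(\delta))-\delta}/\psi_\alpha'(\psi_\alpha^{-1}(\delta))$, and evaluate
\[
\P\!\left(\frac{T_{\lfloor\delta N\rfloor} - t_{\lfloor\delta N\rfloor}}{a_N} \le x\right)
= \P\bigl(T_{\lfloor\delta N\rfloor} \le t_{\lfloor\delta N\rfloor} + x\,a_N\bigr)
= \P\bigl(X(t_{\lfloor\delta N\rfloor} + x\,a_N) \ge C\bigr).
\]
Setting $t_N = t_{\lfloor\delta N\rfloor} + x\,a_N$, by Proposition~\ref{Central} the right-hand side is, up to an error $O(1/\sigma(t_N))$, equal to $\P(\mathcal{G} \ge (C - m(t_N))/\sigma(t_N))$. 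The heart of the matter is then a first-order expansion of $m(t_N)$ about $t_{\lfloor\delta N\rfloor}$: since $m(t) = \sum_n(1-e^{-q(n)t})$ has derivative $m'(t) = \sum_n q(n)e^{-q(n)t}$, and under the scaling $m(\beta N^\alpha) = \psi_\alpha(\beta)N + o(N)$, differentiating in $\beta$ suggests $m'(\beta N^\alpha)\,N^\alpha \approx \psi_\alpha'(\beta)N$, so that
\[
C - m(t_N) = \delta N - m(t_{\lfloor\delta N\rfloor}) - m'(t_{\lfloor\delta N\rfloor})\,x\,a_N + o(\cdot)
\approx -\psi_\alpha'(\psi_\alpha^{-1}(\delta))\,N^{1-\alpha}\,x\,a_N.
\]
Meanwhile Lemma~\ref{lem:estvar} gives $\sigma(t_N)^2 = m(2t_N) - m(t_N)$, which under the scaling tends to $\bigl(\psi_\alpha(2\psi_\alpha^{-1}(\delta)) - \psi_\alpha(\psi_\alpha^{-1}(\delta))\bigr)N = (\psi_\alpha(2\psi_\alpha^{-1}(\delta)) - \delta)N$. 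Substituting these into $(C - m(t_N))/\sigma(t_N)$, the chosen $a_N$ is precisely calibrated so that the ratio converges to $-x$, whence $\P(\mathcal{G}\ge -x) = \P(\mathcal{G}\le x)$ by symmetry, establishing convergence in distribution.

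The main obstacle I anticipate is making the expansion of $m(t_N)$ rigorous uniformly over the perturbation $x\,a_N$, rather than merely formally differentiating the asymptotic identity of Lemma~\ref{lem:AsympE}. The difficulty is that Lemma~\ref{lem:AsympE} controls $m(\beta N^\alpha)/N$ only pointwise in $\beta$ with an $o(N)$ error, and one needs to differentiate through this, or equivalently to control $m(t_{\lfloor\delta N\rfloor} + x a_N) - m(t_{\lfloor\delta N\rfloor})$ where the increment $x a_N = O(N^{\alpha-1/2})$ is of smaller order than the scale $N^\alpha$. I would handle this by working directly with the Riemann-sum representation $\frac1N\sum_n(1-e^{-\beta N^\alpha/n^\alpha})$ and its derivative in $\beta$, checking that the correction term is indeed $o(N)$ after dividing by $a_N$, and verifying that $\sigma(t_N)\to\infty$ (of order $N^{1/2}$) so that the Berry--Esseen error $K/\sigma(t_N)$ vanishes. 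A secondary technical point is confirming $\psi_\alpha'(\psi_\alpha^{-1}(\delta))$ and the variance constant $\psi_\alpha(2\psi_\alpha^{-1}(\delta))-\delta$ are both strictly positive and finite, so that the normalising factor is well defined.
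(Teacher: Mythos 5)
Your proposal follows essentially the same route as the paper: the first claim via Lemma \ref{lem:AsympE} and inversion of $\psi_\alpha$, and the Gaussian limit via the duality $\{T_C>t\}=\{X(t)<C\}$, Proposition \ref{Central} (with $\sigma(\beta N^\alpha)\to\infty$), Lemma \ref{lem:estvar} for the variance, and a first-order expansion of $m$ at the perturbation scale $N^{\alpha-1/2}$, with the normalising constant calibrated exactly as in the statement. The uniformity issue you flag --- that the $o(N)$ error in Lemma \ref{lem:AsympE} is of larger order than the increment $m(t_C+xa_N)-m(t_C)=O(N^{1/2})$ you need to extract --- is a genuine subtlety that the paper's own proof passes over silently, so your plan to work directly with the Riemann-sum representation and its derivative in $\beta$ is, if anything, more careful than the published argument.
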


\begin{proof}
The asymptotic relation for $t_C$ is a direct consequence of  Lemma \ref{lem:AsympE}. We have,
$$\E(X(\beta N^{\alpha}))= \psi_{\alpha}(\beta)N+o(N)$$ 
and, by definition of $t_C$,
$$E(X(t_c))=\lfloor \delta N \rfloor = \delta N + o(N).$$

Let $s_C=\psi_a^{-1}(\delta)N^{\alpha}$ so that 
$$\E(X(s_C))=\psi_{\alpha}(\psi_{\alpha}^{-1}(\delta)) N +o(N)=\delta N + o(N).$$
It follows that $t_C =s_C+o(N^{\alpha}) $ and the first statement of the proposition is proved.

Now consider the following relation, for $x\in\R$,
\begin{multline*}
\P(T_C-t_C\geq x)=\P(X(t_C+x)<C )=\\\P\left[\frac{X(t_C{+}x){-}m(t_C{+}x)}{\sigma(t_C{+}x)}{<}\frac{m(t_C){-}m(t_C{+}x)}{\sqrt{m(2(t_C{+}x)){-}m(t_C{+}x)}}\right].
\end{multline*}
The second equality follows on setting $m(t_C)=C$ and applying Lemma~\ref{lem:estvar}. 

Since for the Zipf laws, $\sigma(\beta N^\alpha)$ goes to infinity as $N\to \infty$, Proposition~\ref{Central} shows that the variable
\[
\frac{X(\beta N^\alpha)-m(\beta N^\alpha)}{\sigma(\beta N^\alpha)}
\]
is arbitrarily close to a centred normal variable if $N$ is sufficiently large. The only thing to check is the asymptotic behaviour of the fraction
\[
\frac{m(t_C)-m(t_C+x)}{\sqrt{m(2(t_C+x))-m(t_C+x)}}
\]
when $x=\beta N^{\alpha-1/2}$. 

Using Lemma \ref{lem:AsympE}, write $m(t_C+x)$ as
\begin{eqnarray*}
m(t_C+x) & = & m\left(N^{\alpha}(\psi_{\alpha}^{-1}(\delta)+\frac{x}{N^{\alpha}}+o(1))\right) \\
                 & = & N \psi_{\alpha}\left(\psi_{\alpha}^{-1}(\delta)+\frac{x}{N^{\alpha}} \right) + o(N).
\end{eqnarray*}
Expanding $\psi_{\alpha}$ about $\psi_{\alpha}^{-1}(\delta)$ yields the numerator 
$$ m(t_C) - m(t_C+x) = - \psi'_{\alpha}(\psi_{\alpha}^{-1})(\delta)N^{1-\alpha} x + o(N).$$
The denominator can similarly be written
$$\sqrt{m(2t_c+x))-m(t_C+x)} = \sqrt{\psi_{\alpha}(2\psi_{\alpha}^{-1}(\delta))N -\delta N+ o(N)}.$$
The second statement of the proposition follows on substituting for $x$. 

% This is a straightforward application of Lemma~\ref{lem:AsympE}. 
\end{proof}

\vspace{2mm}
\noindent {\bf Remark.} The expression for $t_{\lfloor \delta N\rfloor}$ in Proposition \ref{tclim} coincides with an equivalent quantity derived differently in Theorems 2 and 3 of Jelenkovic \etal   \cite{JKR05} for $\alpha=1$ and $\alpha<1$, respectively. Theorem 1 of the same paper provides an explicit expression for $t_C$ when $\alpha>1$ and $N$ is infinite. This expression proves significantly less precise than (\ref{eq:tC}) when $\alpha$ is not much greater than 1 (1.2, say), even for $N$ as large as 10000.

\subsection{The Che approximation}
Proposition \ref{tclim} shows that there is a function $\Theta(N)$ such that, as $N\to \infty$, $\Theta(N)/t_C \to0$ while $(T_C-t_C)/\Theta(N)$ converges to a Gaussian random variable. Thus in this case, $T_C$ does indeed become deterministic (i.e., $T_C/t_C \sim 1$) and the original argument of Che \etal applies. We have $\E (e^{-qT_C}) \to e^{-qt_C}$ as $C$ (and $N$) $\to \infty$. 

\subsection{Geometric popularity}
It can be shown for geometric popularity, $q(n)=\rho^n$ for $n\ge 0$, that $m(t)=-\log t/ \log \rho+O(1)$. Thus, by Lemma \ref{lem:estvar}, $\sigma(t)^2 = \log 2/ \log \rho +O(1)$, i.e., the variance of $X(t)$ is asymptotically constant and small compared to $m(t)$. This explains why the Che approximation works for geometric popularity (applying the arguments in Section \ref{sec:whyitworks}) although $T_C$ is by no means deterministic.

\section{A ``Che approximation'' for random replacement}
\label{sec:randomche}
The excellent accuracy of the Che approximation for LRU replacement motivates the search for a similar approach for other policies. In this section we consider random replacement and derive an approximation that is similar in accuracy and complexity to the Che approximation.

Random might be preferred to LRU because it is simpler to implement. When a new object is to be added to the cache, it overwrites a randomly chosen existing object independently of the popularity of the latter. This policy was shown by Gelenbe to have exactly the same hit rates as FIFO  \cite{Gelebe73}.  

The exact analysis of Gelenbe is too complex for practical evaluation. A recent paper by Simonian \etal \cite{KS12} provides large cache asymptotics applicable for a Zipf popularity law with $\alpha>1$. Dan and Towsley  \cite{DT90} propose an approximate evaluation for the hit rates of a FIFO cache.  %It has the same complexity as the approximation proposed by Dan and Towsley.

Note that $h(n)$, the hit rate for object $n$, is the probability object $n$ is in the cache at an arbitrary instant. This can be expressed by Little's formula as the product $\lambda(n)\times T(n)$ where $\lambda(n)$ is the frequency at which object $n$ enters the cache and $T(n)$ is its average sojourn time. 

Clearly, $\lambda(n) = (1-h(n))q(n)$.  We assume $T(n)$ is inversely proportional to the arrival rate of requests for objects other than $n$. This is only approximately true but is intuitively reasonable. We deduce, $h(n) = (1-h(n))q(n) \times  \tau_C/\sum_{i\ne n}q(i)$ or,
\begin{equation}
h(n) = \frac{q(n) \tau_C}{\sum_{i\ne n}q(i)+q(n)\tau_C},
\label{eq:h(n)rand}
\end{equation}
 for some unknown constant $\tau_C$. Equating the sum of hit rates to the cache size $C$, as in Section \ref{sec:cheapprox}, yields the equation for $\tau_C$,
\begin{equation}
C= \sum_{n=1}^N \frac{q(n) \tau_C}{\sum_{i\ne n}q(i)+q(n)\tau_C}.
 \label{eq:cherand}
 \end{equation}
 Equation (\ref{eq:cherand}) is the `random' equivalent to the LRU Che identity (\ref{eq:hitC}). Solving for $\tau_C$ yields the hit rates via (\ref{eq:h(n)rand}).

Figure \ref{fig:rand-approx} shows results analogous to those of Figure \ref{fig:approx} for LRU. The accuracy is  clearly comparable. It largely remains to analyse why this is so but note that the assumption sojourn times are proportional to the request rate of other objects appears equally reasonable for all realistic popularity laws. The FIFO algorithm of Dan and Towsley \cite{DT90} is similarly accurate. 
 
\begin{figure}
  \subfloat[$N=10^4$, Zipf(0.8), objects 1, 10, 100, 1000]{\label{fig:gull}\includegraphics[width=80mm,height=50mm]{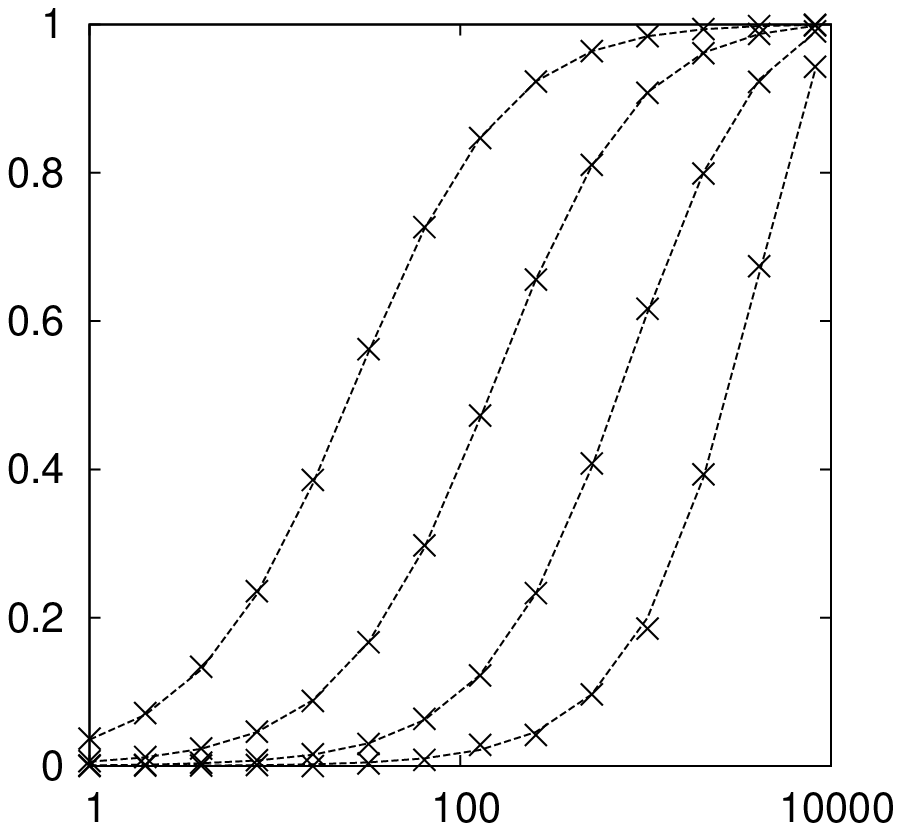}}   
%\hspace{-15mm}     
  \subfloat[$N=10^4$, Zipf(1.2), objects 1, 10, 100, 1000]{\label{fig:tiger}\includegraphics[width=80mm,height=50mm]{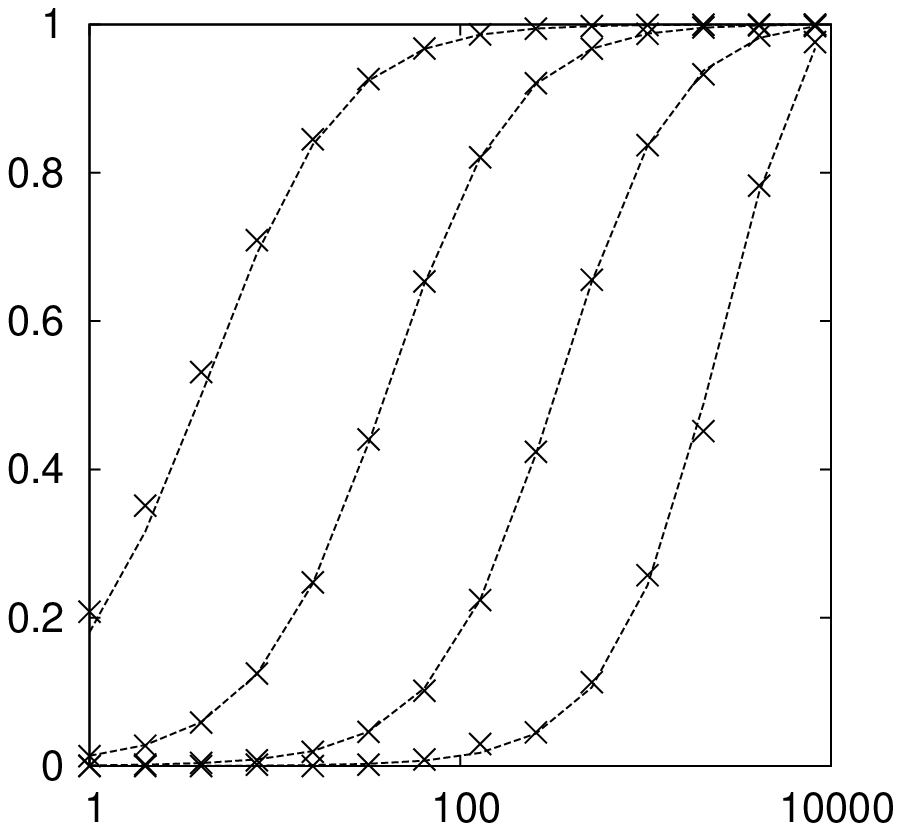}} \\
\hspace{-15mm}     
\subfloat[$N=100$, Geo(0.9), objects 1, 4, 16, 64]{\label{fig:mouse}\includegraphics[width=80mm,height=50mm]{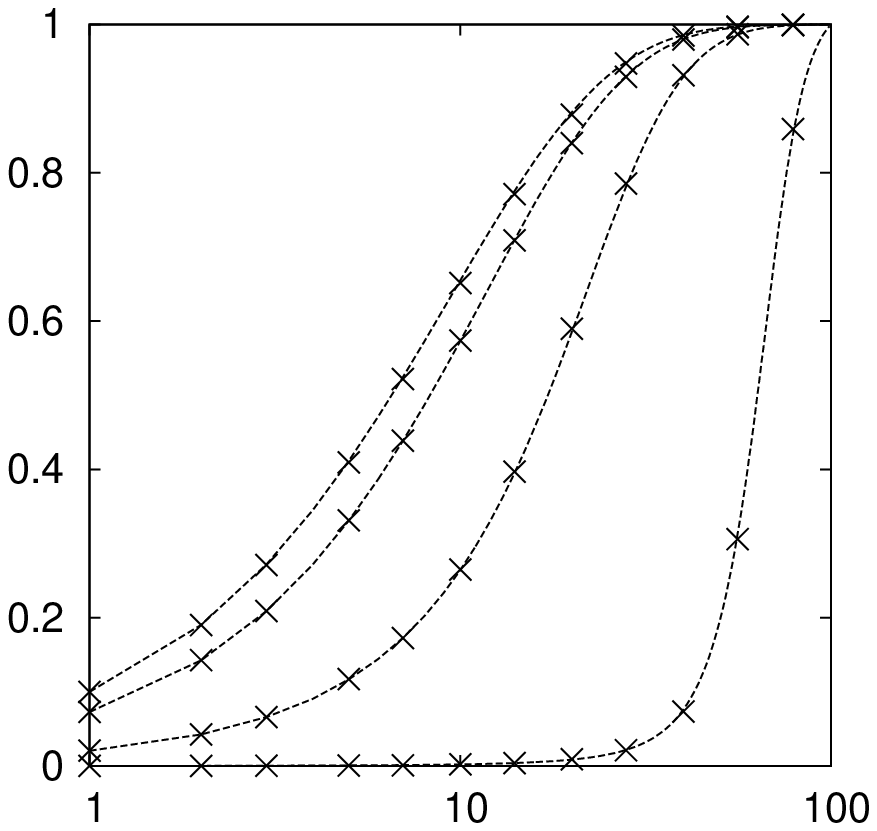}} 
   \caption{Hit rate against cache size for selected objects, random replacement}
  \label{fig:rand-approx}
\end{figure}

\section{Application}
\label{sec:application}

In this section we present an application that is intended to illustrate the power of the Che approximation. We revisit the networking example introduced in \cite{FRRS12} where users retrieve a mixture of web, file sharing, user-generated content (UGC) and video-on-demand (VoD) content via a cache.

\begin{table}[t]
\begin{center}
\begin{tabular}{l  |  c |c| c| c }
\hline
 & traffic share &  population  & object size & popularity \\
  &($p_i$)  & ($N_i$)&  ($\theta_i$) & ($\alpha_i$)\\
   \hline
Web   & .18 & $10^{11}$ & 10  & 0.8 \\
File sharing  & .36 & $10^{5}$  &  $10^6$ & 0.8\\
UGC  & .23 &  $10^8$  &  $10^3$ & 0.8 \\
VoD & .23 & $10^4$ &  $10^4$ & 1.2 \\ 
\hline
\end{tabular}
\caption{ Internet content traffic characteristics}
\label{tab:characteristics}
\vspace{-3mm}
\end{center}
\end{table}

Traffic characteristics and assumed popularity laws are presented in Table \ref{tab:characteristics}. Note the very large populations and diverse popularity laws. These make other performance evaluation approaches, including simulation, impractical. 

We suppose objects are divided into 1 KB chunks. For the sake of simplicity, objects of the same type $i$ are supposed to have constant size $\theta_i$ chunks.  We assume objects have Zipf popularity with exponent $\alpha_i$ and chunks inherit the popularity of their parent object.  The proportion of type $i$ traffic in bit/s downloaded by users is $p_i$. 

Given these assumptions we deduce the popularity of chunk $k$ of object $n$ of type $i$, for $1 \le i \le 4$, $1\le n \le N_i$ and $1\le k \le \theta_i$,
$$ q_i(n,k) = \frac{p_i/n^{\alpha_i} }{ \sum_{j=1}^{N_i} \theta_i/j^{\alpha_i} }.$$

In applying the Che approximation we optimize summations over as many as  $10^{11}$ objects in (\ref{eq:hitC}) and (\ref{eq:cherand}) by grouping successive terms which are nearly equal. Computation is then very rapid.

\begin{figure}
  \centering
\includegraphics[width=0.8\textwidth]{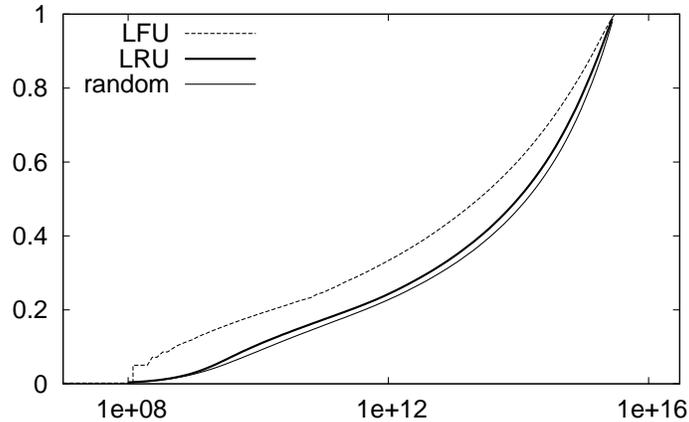}    
\caption{Hit rate against cache size (in bytes) for the traffic mix of Table \ref{tab:characteristics} with LFU, LRU and random replacement}
 \label{fig:lruvsrand}
 \vspace{-2mm}

\end{figure}

Figure \ref{fig:lruvsrand} compares the overall hit rate for the traffic mix as a function of cache size for three different replacement policies: least frequently used (LFU),  LRU and random. The LFU hit rate is calculated as in \cite{FRRS12} while for LRU and random we use the Che approximations of Sections \ref{sec:cheapprox} and \ref{sec:randomche}, respectively. 

The significance of these and similar results is discussed in \cite{FRRS12}. An additional observation is that random is hardly worse than LRU in this case. Our main objective in presenting these results is to stress that they are readily derived using the Che approximation when, in view of the huge populations and diversity of content objects, any other approach would be impracticable or inexact.

 \section{Conclusion}

The Che approximation constitutes a versatile and highly accurate tool for predicting the hit rate performance of a cache with LRU replacement. We have demonstrated in the paper why the approximation works so well, even when the conditions suggested by its authors are not satisfied. The analysis lends confidence to using this tool to evaluate the performance of an information-centric network where the large populations and diversity of content catalogues preclude utilization of alternative approaches. Note, in particular, that the Che approximation can be usefully combined with the approach in \cite{RKT2010} to evaluate large-scale, general cache networks.

% that's all folks
\end{document}